\documentclass{article}
\usepackage[utf8]{inputenc}
\usepackage{amsmath}
\usepackage{amssymb}
\usepackage{amsthm}

\newtheorem{lemma}{Lemma}
\newtheorem{proposition}{Proposition}
\newtheorem{theorem}{Theorem}
\newtheorem{corollary}{Corollary}
\newtheorem{remark}{Remark}

\DeclareMathOperator{\CFL}{CFL} %context-free languages
\DeclareMathOperator{\REG}{REG} %

\title{Note on dissecting power of regular languages}
\author{Josef Rukavicka}
\date{October 2023}

\begin{document}

\maketitle

\begin{abstract}
Let $c>1$ be a real constant. We say that a language $L$ is $c$-\emph{constantly growing} if for every word $u\in L$ there is a word $v\in L$ with $\vert u\vert<\vert v\vert\leq c+\vert u\vert$.
We say that a language $L$ is $c$-\emph{geometrically growing} if for every word $u\in L$ there is a word $v\in L$ with $\vert u\vert<\vert v\vert\leq c\vert u\vert$.
Given a language $L$, we say that $L$ is $\REG$-\emph{dissectible} if there is a regular language $R$ such that  $\vert L\setminus R\vert=\infty$ and $\vert L\cap R\vert=\infty$. In 2013, it was shown that every $c$-constantly growing language $L$ is $\REG$-dissectible. In 2023, the following open question has been presented: ``Is the family of geometrically growing languages $\REG$-dissectible?'' %TODO REG not defined

We construct a $c$-geometrically growing language $L$ that is not $\REG$-dissectible. Hence we answer negatively to the open question.
\end{abstract}
x
\section{Introduction}

Let $L_1$ and $L_2$ be two infinite languages. We say that $L_1$ \emph{dissects} $L_2$ if \[\vert L_2\setminus L_1\vert=\infty\quad\mbox{ and }\quad\vert L_1\cap L_2\vert=\infty\mbox{.}\] Dissection of infinite languages have been investigated in recent years. Most notably, in \cite{YAMAKAMI2013116}, the dissecting power of regular languages has been researched.

Let $\mathcal{C}$ be a family of languages. We say that a language $L_2$ is $\mathcal{C}$-\emph{dissectible}  if there is $L_1\in \mathcal{C}$ such that $L_1$ dissects $L_2$. Let $\REG$ denote the family of regular languages and let $\CFL$ denote the family of context free languages. In \cite{YAMAKAMI2013116}, several examples of language families that are $\REG$-dissectible  have been presented. In addition, it was proved that the language \[L=\{a^{n!}\mid n\mbox{ is a positive integer }\}\] is not $\REG$-dissectible.

An infinite language $L$ is called \emph{constantly growing} or $c$-\emph{constantly growing} if there is a real constant $c\geq 1$ such that for every word $u\in L$ there is a word $v\in L$ with $\vert u\vert<\vert v\vert\leq c+\vert u\vert$. In \cite{YAMAKAMI2013116}, it has been proved that every constantly growing language $L$ is $\REG$-dissectible.

In \cite{rukavicka_dmtcs_2023}, a ``natural''  generalization of constantly growing languages has been introduced as follows.  
A language $L$ is called a \emph{geometrically growing} or $c$-\emph{geometrically growing} language if there is a real constant $c>1$ such that for every $u\in L$ there exists $v\in L$ with $\vert u\vert<\vert v\vert\leq c\vert u\vert$. 

In \cite{rukavicka_dmtcs_2023} it was shown how to dissect a  geometrically growing language by a homomorphic image of intersection of two context-free languages.

In the current article, for every real constant $c>1$, we construct a $c$-geometrically growing language $L$ that is not $\REG$-dissectible. The language $L$ is constructed over an alphabet with one letter. 
This implies also that $L$ is not $\CFL$-dissectible; just realize that every context free language over one letter alphabet is a regular language. If we consider only regular languages, context free languages, and their intersections then, up to the homomorphic images, we can claim that the article \cite{rukavicka_dmtcs_2023} presents the ``best possible'' result, how to dissect geometrically growing languages.

Our result gives a negative answer to the open question from \cite{rukavicka_dmtcs_2023}:  ``Is the family of geometrically growing languages $\REG$-dissectible?''. 

For a more detailed overview and other open questions concerning the dissection of infinite languages, see \cite{rukavicka_dmtcs_2023} and \cite{YAMAKAMI2013116}.

\section{Preliminaries}
Let $\mathbb{N}_0$ denote the set of all non-negative integers, 
let $\mathbb{N}_1$ denote the set of all positive integers and 
let $\mathbb{R}^+$ denote the set of all positive real numbers.

Let $\Sigma´=\{a\}$ denote the alphabet containing exactly one letter $a$.

\section{Geometrically growing language}
Suppose $\alpha,\beta\in\mathbb{N}_1$ with $\alpha<\beta$. Let $\gamma=\frac{1}{\ln{\beta}-\ln{\alpha}}$.

Let \[\Delta=\{(j,n)\mid j,n\in\mathbb{N}_0\mbox{ and }n>\alpha\lfloor\gamma\ln{n}\rfloor\mbox{ and }j\in\{0,1,\dots,\lfloor\gamma\ln{n}\rfloor\}\}\mbox{.}\]
Given $(j,n)\in\Delta$, let \[\phi(j,n)=\left(\frac{\beta}{\alpha}\right)^{j}n!\] and let \[\omega(n)=\frac{n!}{(\lfloor\gamma\ln{n}\rfloor \alpha+1)!}\mbox{.}\]

We will use the function $\phi(j,n)$ to define the lengths of words. Hence we need the value of $\phi(j,n)$ to a positive integer; the next lemma shows that this requirement is satisfied. In addition we show that $\phi(j,n)$ is divisible by $\omega(n)$.

\begin{lemma}
    \label{fjf88e7f}    
    If $(j,n)\in\Delta$ then $\phi(j,n)$ is a positive integer and \[0\equiv \phi(j,n)\pmod{\omega(n)}\mbox{.}\]
\end{lemma}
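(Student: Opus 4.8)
The plan is to abbreviate $k=\lfloor\gamma\ln n\rfloor$ throughout, so that the membership condition $(j,n)\in\Delta$ reads simply $n>\alpha k$ and $0\le j\le k$. Both assertions — that $\phi(j,n)$ is a positive integer and that $\omega(n)\mid\phi(j,n)$ — reduce to a single elementary divisibility fact about factorials, which I would isolate first: for any positive integers $\alpha$ and $m$, one has $\alpha^{\lfloor m/\alpha\rfloor}\mid m!$. This is immediate, since the multiples $\alpha,2\alpha,\dots,\lfloor m/\alpha\rfloor\,\alpha$ are $\lfloor m/\alpha\rfloor$ distinct integers in $\{1,\dots,m\}$ whose product $\alpha^{\lfloor m/\alpha\rfloor}\,\lfloor m/\alpha\rfloor!$ divides $m!$.

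For the first assertion, I would write $\phi(j,n)=\beta^j\,(n!/\alpha^j)$ and show $\alpha^j\mid n!$. Since $n>\alpha k$ forces $\lfloor n/\alpha\rfloor\ge k\ge j$, the divisibility fact gives $\alpha^j\mid\alpha^{\lfloor n/\alpha\rfloor}\mid n!$; hence $n!/\alpha^j$ is a positive integer, and so is $\phi(j,n)=\beta^j\,(n!/\alpha^j)$.

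For the congruence, I would first note that $\omega(n)$ is itself a positive integer: from $n>\alpha k$ we get $k\alpha+1\le n$, so $(k\alpha+1)!\mid n!$. Then I would compute the quotient directly,
\[
\frac{\phi(j,n)}{\omega(n)}=\frac{\beta^j\,n!/\alpha^j}{\,n!/(k\alpha+1)!\,}=\beta^j\,\frac{(k\alpha+1)!}{\alpha^j},
\]
and apply the divisibility fact a second time, now with $m=k\alpha+1$. Since $\lfloor(k\alpha+1)/\alpha\rfloor\ge k\ge j$, we obtain $\alpha^j\mid(k\alpha+1)!$, so the quotient is a positive integer and therefore $\omega(n)\mid\phi(j,n)$, which is exactly $0\equiv\phi(j,n)\pmod{\omega(n)}$.

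The only genuine content is the counting of multiples of $\alpha$, and the sole place the hypothesis $(j,n)\in\Delta$ enters is through the inequality $n>\alpha k$, which is used twice: once to guarantee enough factors of $\alpha$ inside $n!$, and once to guarantee $k\alpha+1\le n$ so that $\omega(n)$ and the displayed quotient are well defined. I expect no real obstacle here; the main thing to get right is the bookkeeping — confirming $\lfloor m/\alpha\rfloor\ge k$ in each of the two applications, where one must be mildly careful to treat $\alpha=1$ and $\alpha\ge 2$ uniformly.
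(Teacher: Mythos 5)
Your proof is correct and rests on the same core observation as the paper's: the multiples $\alpha,2\alpha,\dots,k\alpha$ (with $k=\lfloor\gamma\ln n\rfloor$) are distinct factors of the relevant factorial, so enough powers of $\alpha$ divide it. In fact your write-up is more complete than the paper's own proof, which only establishes $n!=m\alpha^{k}$ for some $m\in\mathbb{N}_1$ (hence that $\phi(j,n)$ is a positive integer) and then stops, leaving the congruence $0\equiv\phi(j,n)\pmod{\omega(n)}$ unargued; your second application of the divisibility fact, with $m=k\alpha+1$, giving $\phi(j,n)/\omega(n)=\beta^{j}(k\alpha+1)!/\alpha^{j}\in\mathbb{N}_1$, is exactly the missing step, and your preliminary check that $\omega(n)$ itself is a positive integer is also worth keeping. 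The bookkeeping you flag, namely $\lfloor n/\alpha\rfloor\ge k$ and $\lfloor(k\alpha+1)/\alpha\rfloor\ge k$ for every $\alpha\ge 1$, checks out.
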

\begin{proof}
From $n>\alpha\lfloor\gamma\ln{n}\rfloor$ we have that 
\[n!= n(n-1)\cdots (\lfloor\gamma\ln{n}\rfloor \alpha) \cdots ((\lfloor\gamma\ln{n}\rfloor -1)\alpha)\cdots (2\alpha)\cdots \alpha\cdots 1\mbox{.}\]
It follows that $n!=m\alpha^{\lfloor\gamma\ln{n}\rfloor}$ for some $m\in\mathbb{N}_1$.
Consequently $\phi(j,n)$ is a positive integer.

This ends the proof.
\end{proof}

We show that for given $(j,n)\in\Delta$ there is $(\overline j,\overline n)\in\Delta$ such that $\phi(\overline j,\overline n)$ is strictly bigger that $\phi(j,n)$ and that $\phi(\overline j,\overline n)$ and $\phi(j, n)$ are ``close'' to each other.
\begin{proposition}
\label{ffh8dhjefd}
    If $(j,n)\in\Delta$ then there is $(\overline j,\overline n)\in \Delta$ such that $\overline n\geq n$ and \[\phi(j,n)<\phi(\overline j,\overline n)\leq \frac{(n+1)\beta}{n\alpha}\phi(j,n)\mbox{.}\]
\end{proposition}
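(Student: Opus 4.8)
The plan is to exhibit the witness $(\overline j,\overline n)$ explicitly, splitting on whether $j$ is already the largest admissible first coordinate for its level. Write $k=\lfloor\gamma\ln n\rfloor$, so that the pairs in $\Delta$ with second coordinate $n$ are exactly $(0,n),(1,n),\dots,(k,n)$, and along them $\phi(\cdot,n)$ is strictly increasing with constant ratio $\beta/\alpha$. The identity that governs everything is $\gamma\ln(\beta/\alpha)=1$, which is immediate from $\gamma=1/(\ln\beta-\ln\alpha)$ and which yields $(\beta/\alpha)^{\gamma\ln n}=n$. This is precisely what locates the top value of level $n$ relative to the bottom value of level $n+1$.

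In the first case, $j<k$, I take $(\overline j,\overline n)=(j+1,n)$. This pair lies in $\Delta$ since $j+1\le k$ and the defining inequality $n>\alpha k$ is untouched. Then $\phi(\overline j,\overline n)=\frac{\beta}{\alpha}\phi(j,n)>\phi(j,n)$, and the upper bound holds at once because $\frac{\beta}{\alpha}\le\frac{(n+1)\beta}{n\alpha}$. This case is routine.

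In the second case, $j=k$, which is the transition to the next level, I take $(\overline j,\overline n)=(0,n+1)$, so that $\phi(\overline j,\overline n)=(n+1)!$ while $\phi(j,n)=(\beta/\alpha)^{k}n!$. The lower inequality $\phi(j,n)<(n+1)!$ reduces to $(\beta/\alpha)^{k}<n+1$, which follows from $k\le\gamma\ln n$ and the identity above, since then $(\beta/\alpha)^{k}\le n$. The upper inequality $(n+1)!\le\frac{(n+1)\beta}{n\alpha}\phi(j,n)$ reduces after cancelation to $n\le(\beta/\alpha)^{k+1}$, which follows from $k+1>\gamma\ln n$ (as $k=\lfloor\gamma\ln n\rfloor$), again through the identity. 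Thus both length inequalities drop out of the single sandwich $(\beta/\alpha)^{k}\le n<(\beta/\alpha)^{k+1}$, and $\overline n=n+1\ge n$ is clear.

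The hard part will be the membership claim $(0,n+1)\in\Delta$ in this second case, i.e.\ verifying $n+1>\alpha\lfloor\gamma\ln(n+1)\rfloor$. Writing $k'=\lfloor\gamma\ln(n+1)\rfloor$, this amounts to controlling how much the slowly varying quantity $\lfloor\gamma\ln\cdot\rfloor$ can increase when $n$ advances by one: if $k'=k$ the claim is immediate from $n>\alpha k$, but if $k'>k$ one must rule out $\alpha k'$ catching up to $n+1$. This is exactly the point where the defining inequality $n>\alpha\lfloor\gamma\ln n\rfloor$ of $\Delta$ must be leveraged, together with the bound $(\beta/\alpha)^{k'}\le n+1$ coming from $k'\le\gamma\ln(n+1)$, to force $\alpha k'<n+1$; I expect this step to require the bulk of the case analysis, while everything else is bookkeeping around the identity $\gamma\ln(\beta/\alpha)=1$.
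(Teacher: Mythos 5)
Your witnesses and both length inequalities coincide exactly with the paper's proof: for $j<k$ the pair $(j+1,n)$ with ratio $\beta/\alpha$, and for $j=k$ the pair $(0,n+1)$ controlled by the sandwich $(\beta/\alpha)^{k}\leq n<(\beta/\alpha)^{k+1}$, which is what the paper's displays (2)--(4) encode. The step you flag as ``the hard part'' and leave open --- verifying $(0,n+1)\in\Delta$, i.e.\ $n+1>\alpha\lfloor\gamma\ln(n+1)\rfloor$ --- is indeed the only nontrivial point, and the paper passes over it in silence. Unfortunately it is not merely hard: it fails in general, so neither your argument nor the paper's can be completed as set up. Take $\alpha=2$, $\beta=3$, so $\gamma=1/\ln(3/2)\approx 2.4663$. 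Then $\gamma\ln 11\approx 5.914$, so $\lfloor\gamma\ln 11\rfloor=5$ and $11>2\cdot 5$, giving $(5,11)\in\Delta$; but $\gamma\ln 12\approx 6.129$, so $\lfloor\gamma\ln 12\rfloor=6$ and $12>2\cdot 6$ is false, giving $(0,12)\notin\Delta$. Worse, no witness at all exists for $(j,n)=(5,11)$: there is no pair in $\Delta$ with second coordinate $12$, the value $\phi(5,11)=(3/2)^{5}\,11!$ is the largest at level $11$, and the smallest larger value attained on $\Delta$ is $\phi(0,13)=13!=156\cdot 11!$, whose ratio to $\phi(5,11)$ is about $20.5$, far above the required bound $\frac{12\cdot 3}{11\cdot 2}\approx 1.64$. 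So the proposition as stated is false, and the gap you isolated is exactly where it breaks.

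The statement and your proof are both repaired by adding the hypothesis that $n$ is sufficiently large. Since $m-\alpha\gamma\ln m\to\infty$, there is $N$ such that $m>\alpha\gamma\ln m\geq\alpha\lfloor\gamma\ln m\rfloor$ for every $m\geq N$; hence for $n\geq N$ the pair $(0,n+1)$ automatically lies in $\Delta$ and both of your cases go through verbatim. This weakened form is all that Corollary \ref{dh787e9fjrer} actually needs, because it already discards the finitely many words with $n\leq n_{0}$ (one just takes $n_{0}\geq N$ as well), so the main theorem is unaffected. Your instinct that the membership claim would ``require the bulk of the case analysis'' was correct; the resolution is that it cannot be proved for all of $\Delta$ and must instead be bought by restricting the range of $n$.
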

\begin{proof}
    If $j<\lfloor\gamma\ln{n}\rfloor$ 
    then \begin{equation}\label{kdiej98}\begin{split}
        \frac{\phi(j+1,n)}{\phi(j,n)}=\frac{\left(\frac{\beta}{\alpha}\right)^{j+1} n!}{\left(\frac{\beta}{\alpha}\right)^j n!}=\frac{\beta}{\alpha}\mbox{.}\end{split}
    \end{equation}

    We have that \begin{equation}\label{fkk9ejdd}\begin{split}\left(\frac{\beta}{\alpha}\right)^{\lfloor\gamma\ln{n}\rfloor} =\left(\frac{\beta}{\alpha}\right)^{\lfloor\frac{\ln{n}}{\ln{\beta}-\ln{\alpha}}\rfloor}\leq\\ \left(\frac{\beta}{\alpha}\right)^{\frac{\ln{n}}{\ln{\beta}-\ln{\alpha}}}=e^{\frac{\ln{n}}{\ln{\beta}-\ln{\alpha}}\ln{\frac{\beta}{\alpha}}}=e^{\ln{n}}=n\mbox{.}\end{split}\end{equation}
   
From (\ref{fkk9ejdd}) it follows that 
    \begin{equation}\label{kdiej99}\begin{split}
    \frac{\phi(0,n+1)}{\phi(\lfloor\gamma\ln{n}\rfloor,n)}=\frac{\left(\frac{\beta}{\alpha}\right)^{0}(n+1)!}{\left(\frac{\beta}{\alpha}\right)^{\lfloor\frac{\ln{n}}{\ln{\beta}-\ln{\alpha}}\rfloor}n!}=\frac{(n+1)!}{\left(\frac{\beta}{\alpha}\right)^{\lfloor\frac{\ln{n}}{\ln{\beta}-\ln{\alpha}}\rfloor}n!}\geq\frac{(n+1)}{n}>1\mbox{}
    \end{split}\end{equation}
    and \begin{equation}\label{kde36eddj}\begin{split}
    \frac{\phi(0,n+1)}{\phi(\lfloor\gamma\ln{n}\rfloor,n)}=\frac{\left(\frac{\beta}{\alpha}\right)^{0}(n+1)!}{\left(\frac{\beta}{\alpha}\right)^{\lfloor\frac{\ln{n}}{\ln{\beta}-\ln{\alpha}}\rfloor}n!}<\frac{(n+1)!}{\left(\frac{\beta}{\alpha}\right)^{\frac{\ln{n}}{\ln{\beta}-\ln{\alpha}}-1}n!}=\\\frac{(n+1)!}{n\left(\frac{\beta}{\alpha}\right)^{-1}n!}=\frac{\beta(n+1)}{\alpha n}\mbox{.}
    \end{split}\end{equation}
    
    We distinguish the following three cases when proving the proposition:
    \begin{itemize}
        \item If $0\leq j<\lfloor\gamma\ln{n}\rfloor$ then from (\ref{kdiej98}) it follows that \[\phi(j,n)<\phi(j+1,n)\leq \frac{\beta}{\alpha}\phi(j,n)\mbox{.}\]
        \item If $j=\lfloor\gamma\ln{n}\rfloor$ then from (\ref{kdiej99}) and (\ref{kde36eddj}) we have that \[\phi(j,n)<\phi(0,n+1)\leq \frac{(n+1)\beta}{n\alpha}\mbox{.}\]
    \end{itemize}
   It follows that for every $(j,n)\in\Delta$ there is $(\overline j,\overline n)\in\Delta$ such that \[\phi(j,n)<\phi(\overline j,\overline n)\leq \frac{(n+1)\beta}{n\alpha}\mbox{.}\]
    This ends the proof.
\end{proof}
\begin{remark}
    From the proof of Proposition \ref{ffh8dhjefd} it follows also that \[\phi(j,n)\leq \phi(\overline j,\overline n)\] if and only if $n\leq \overline n$ or ($n= \overline n$ and $j\leq \overline j$).
\end{remark}

Given the letter $a\in\Sigma$, let \[\begin{split}\Pi=\{a^{\phi(j,n)}\mid (j,n)\in\Delta\}\mbox{.}\end{split}\]

\begin{remark}
    Lemma \ref{fjf88e7f} implies that $\phi(j,n)$ is a positive integer. Hence the definition of the language $\Pi$ makes sense. Also it is clear that if $a^{\phi(j,n)}\in\Pi$ then $\vert a^{\phi(j,n)}\vert=\phi(j,n)$.
\end{remark}

We show that for any real constant $c>\frac{\beta}{\alpha}$ there is $c$-geometrically growing language that we construct from the language $\Pi$ by removing a finite number of its elements.
\begin{corollary}
\label{dh787e9fjrer}
If $c\in\mathbb{R}^+$ and $c>\frac{\beta}{\alpha}$ then there is $\overline \Pi\subseteq \Pi$ such that  
     the language $\overline \Pi$ is $c$-geometrically growing and $\vert\Pi\setminus\overline\Pi\vert<\infty$.
\end{corollary}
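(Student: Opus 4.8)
The plan is to obtain $\overline{\Pi}$ from $\Pi$ by deleting the words $a^{\phi(j,n)}$ for which the second coordinate $n$ is too small, keeping exactly those with $n$ above a suitable threshold $n_0$. The motivation is that Proposition \ref{ffh8dhjefd} already produces, for each $(j,n)\in\Delta$, a strictly longer companion word whose length ratio is at most $\frac{(n+1)\beta}{n\alpha}=\frac{\beta}{\alpha}\left(1+\frac{1}{n}\right)$. Since $c>\frac{\beta}{\alpha}$, this ratio dips below $c$ once $n$ is large enough, so only an initial segment of short words can violate the $c$-geometric growth condition.

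First I would fix the threshold explicitly. Solving $\frac{(n+1)\beta}{n\alpha}\leq c$ gives $n\geq\frac{\beta}{c\alpha-\beta}$, where the denominator is positive precisely because $c\alpha>\beta$. So I set $n_0$ to be any positive integer with $n_0\geq\frac{\beta}{c\alpha-\beta}$, and define
\[\overline{\Pi}=\{a^{\phi(j,n)}\mid (j,n)\in\Delta\text{ and }n\geq n_0\}\mbox{.}\]
Because $\frac{\beta}{\alpha}\left(1+\frac{1}{n}\right)$ is decreasing in $n$, the inequality $\frac{(n+1)\beta}{n\alpha}\leq c$ then holds for every $n\geq n_0$.

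Next I would verify the two required properties. For $c$-geometric growth, take any word $a^{\phi(j,n)}\in\overline{\Pi}$, so that $n\geq n_0$. Proposition \ref{ffh8dhjefd} supplies $(\overline{j},\overline{n})\in\Delta$ with $\overline{n}\geq n\geq n_0$ and $\phi(j,n)<\phi(\overline{j},\overline{n})\leq\frac{(n+1)\beta}{n\alpha}\phi(j,n)\leq c\,\phi(j,n)$; since $\overline{n}\geq n_0$, the companion word $a^{\phi(\overline{j},\overline{n})}$ again lies in $\overline{\Pi}$, which is exactly the growth condition. For the finiteness of $\Pi\setminus\overline{\Pi}$, note that the deleted words correspond to pairs $(j,n)\in\Delta$ with $n<n_0$; for each such $n$ the index $j$ ranges over the finite set $\{0,1,\ldots,\lfloor\gamma\ln{n}\rfloor\}$, and there are fewer than $n_0$ admissible values of $n$, so only finitely many words are removed.

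There is no real obstacle here: the substance is entirely carried by Proposition \ref{ffh8dhjefd}, and the remaining work is the elementary observation that its ratio bound tends to $\frac{\beta}{\alpha}$. The only point requiring a line of care is confirming that the companion pair $(\overline{j},\overline{n})$ stays inside $\overline{\Pi}$, which is immediate from the inequality $\overline{n}\geq n$ guaranteed by the proposition.
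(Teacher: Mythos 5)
Your proposal is correct and follows essentially the same route as the paper: both choose a threshold $n_0$ beyond which $\frac{(n+1)\beta}{n\alpha}\leq c$, define $\overline{\Pi}$ by keeping only the words with $n$ at or above that threshold, and invoke Proposition \ref{ffh8dhjefd} (with its guarantee $\overline{n}\geq n$) to confirm the growth condition. Your version merely adds the explicit formula $n_0\geq\frac{\beta}{c\alpha-\beta}$ and spells out the finiteness of $\Pi\setminus\overline{\Pi}$, which the paper leaves implicit.
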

\begin{proof}
Let $n_0\in\mathbb{N}_1$ be such that \[\frac{(n_0+1)\beta}{n_0\alpha}<c\mbox{.}\]
Clearly such $n_0$ exists and also for every $n>n_0$ we have that \[\frac{(n+1)\beta}{n\alpha}<c\mbox{.}\]

Proposition \ref{ffh8dhjefd} asserts that for every $(j,n)\in\Delta$ with $n>n_0$ there is $(\overline j,\overline n)\in\Delta$ such that  \begin{equation}\label{fgfrj99dg6y}\overline n\geq n\quad\mbox{ and }\quad
\phi(j,n)<\phi(\overline j,\overline n)\leq \frac{(n+1)\beta}{n\alpha}\phi(j,n)<c\phi(j,n)\mbox{.}\end{equation}

Let \[\overline \Pi=\{a^{\pi(j,n)}\in\Pi\mid n>n_0\}\mbox{.}\]
From (\ref{fgfrj99dg6y}) it follows that for every $a^{\phi(j,n)}\in\overline \Pi$ there is $a^{\phi(\overline j,\overline n)}\in\overline \Pi$ such that \[\vert a^{\phi(j,n)}\vert<\vert a^{\phi(\overline j,\overline n)}\vert\leq c\vert a^{\phi(j,n)}\vert\mbox{.}\]
We conclude that $\overline \Pi$ is a $c$-geometrically growing language.

This ends the proof.
\end{proof}

\section{Dissecting by a regular language}
\begin{remark}
    We suppose the reader to be familiar with the regular languages and deterministic finite automata. As such, we present the proofs of Lemma \ref{jduiej9ff} and Lemma \ref{idrr9rf} in a less formal way.
\end{remark}
Given a deterministic fnite automaton $D$, let $\mathcal{L}(D)\subseteq \Sigma^*$ denote the set of words that $D$ accepts. 
Let \[\begin{split}\REG(1)=\{R \subseteq\Sigma^*\mid \vert R\vert=\infty\mbox{ and there is a deterministic finite }\\\mbox{ automaton }D \mbox{ such that }\mathcal{L}(D)=R\\\mbox{ and }D\mbox{ has exactly one accepting state}\}\mbox{.}\end{split}\] 
Less formally said, $\REG(1)$ is the set of all infinite regular languages that can be accepted by a deterministic finite automaton having exactly one accepting state. The next lemma illuminates the properties of such regular languages.
\begin{lemma}
\label{jduiej9ff}
    If $R\in\REG(1)$ then there are $q\in\mathbb{N}_0$ and  $r\in\mathbb{N}_1$  such that \[R=\{a^{q+ir}\mid i\in\mathbb{N}_1\}\mbox{.}\]
\end{lemma}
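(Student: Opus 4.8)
The plan is to analyze the structure of a deterministic finite automaton $D$ over the one-letter alphabet $\Sigma=\{a\}$ that accepts $R$ and has exactly one accepting state. Over a unary alphabet, the transition function is completely determined by iterating the single transition $\delta(q,a)$, so reading the input $a^i$ simply means following a single deterministic path of states $s_0, s_1, s_2, \dots$ where $s_0$ is the initial state and $s_{i+1}=\delta(s_i,a)$. Since the automaton has finitely many states, say $k$ of them, this path must eventually enter a cycle. First I would make precise the classical ``rho shape'' of this path: there exist a \emph{tail length} $t\geq 0$ and a \emph{cycle length} $p\geq 1$ such that the states $s_0,\dots,s_{t-1}$ are all distinct and do not recur, while for all $i\geq t$ we have $s_{i+p}=s_i$, and $s_t,\dots,s_{t+p-1}$ are the distinct states forming the cycle.

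Next I would use the hypothesis that $D$ has exactly one accepting state, call it $f$. The word $a^i$ is accepted precisely when $s_i=f$. I would split into two cases according to where $f$ sits. If $f$ lies on the tail, i.e. $f=s_m$ for some $m<t$, then $a^m$ is the only accepted word, contradicting $\vert R\vert=\infty$; so $f$ must lie on the cycle, say $f=s_q$ with $t\leq q<t+p$. The key observation is then that, because the accepting state is unique, $s_i=f$ holds if and only if $i\geq q$ and $i\equiv q \pmod{p}$: within each full traversal of the cycle the path visits $f$ exactly once, and it never visits $f$ during the tail. Hence $R=\{a^{q+ip}\mid i\in\mathbb{N}_0\}$, which after reindexing (absorbing the $i=0$ term into the offset, or equivalently writing $q'=q$ and $r=p$ and shifting the index set to $\mathbb{N}_1$) gives exactly $R=\{a^{q+ir}\mid i\in\mathbb{N}_1\}$ for suitable $q\in\mathbb{N}_0$ and $r=p\in\mathbb{N}_1$.

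The one genuinely delicate point, and the step I expect to require the most care, is justifying that the accepting state is visited \emph{exactly once per cycle} rather than possibly several times. This is where the ``exactly one accepting state'' hypothesis does the real work: a single state $f$ on the cycle is hit by the path exactly at those indices congruent to $q$ modulo the cycle length $p$, since the states $s_t,\dots,s_{t+p-1}$ on the cycle are pairwise distinct. Were multiple accepting states allowed, $R$ could be a union of several arithmetic progressions with common difference $p$, which is precisely why the lemma's conclusion is a single clean progression. I would emphasize that the minimality of $p$ as the cycle length guarantees the distinctness of the cycle states and hence the single hit per period; the remark in the statement that the proof is given ``in a less formal way'' licenses me to appeal directly to this eventual periodicity of unary automata rather than reconstructing it from the transition table in full formal detail.
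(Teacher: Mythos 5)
Your proof takes essentially the same approach as the paper's: the paper likewise observes that a unary DFA with one accepting state is a tail of $q$ states feeding into a cycle of $r$ states with the accepting state on the cycle, and your write-up merely makes explicit the rho-shape decomposition, the exclusion of the accepting state from the tail (via $\vert R\vert=\infty$), and the once-per-period visit. The only wrinkle is the final reindexing from $i\in\mathbb{N}_0$ to $i\in\mathbb{N}_1$, which silently needs the least accepted length to be at least the period $p$; this is an imprecision inherited from the lemma's statement itself rather than a gap in your argument, and only the residue class modulo $r$ is used later in the paper.
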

\begin{proof}
    Consider a deterministic finite automaton $D$ accepting $R$ such that $D$ has exaclty one accepting state. Since $R\in\REG(1)$ we have that such $D$ exists. 
    Since the alphabet $\Sigma$ has only one letter, it is easy to see that $D$ can be represented as a directed graph with $q$+$r$ vertices, where $r$ vertices form a cycle. One vertice in the cycle represents the single accepting state of $D$.
    The lemma follows. 

This ends the proof.
\end{proof}

We show that when investigating a dissecting power of regular languages on one letter alphabet, it suffices to consider regular languages from $\REG(1)$.
\begin{lemma}
\label{idrr9rf}
    If a regular language $R\subseteq \Sigma^*$ dissects a language $L\subseteq \Sigma^*$ then there is $\overline R\in\REG(1)$ such that $\overline R$ dissects $L$.
\end{lemma}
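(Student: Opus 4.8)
The plan is to exploit the very simple shape of unary regular languages, already revealed in the proof of Lemma~\ref{jduiej9ff}. I would start from a deterministic finite automaton $D$ with $\mathcal{L}(D)=R$. Since $\Sigma$ has a single letter, $D$ is, up to labelling, a directed graph consisting of a tail of some length $q$ followed by a cycle of some length $r$. The set of accepting states may now be arbitrary, but each accepting state contributes in a controlled way: an accepting state on the tail contributes a single word, whereas an accepting state on the cycle, reached by words of length $s$, contributes the whole arithmetic progression $\{a^{s+ir}\mid i\in\mathbb{N}_0\}$. Hence I would first record the decomposition
\[R=F\cup\bigcup_{s\in S}\{a^{s+ir}\mid i\in\mathbb{N}_0\}\mbox{,}\]
where $F$ is a finite set (coming from the tail) and $S$ is a finite set of offsets (one per accepting cycle state), all progressions sharing the common difference $r$.

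Next I would use the hypothesis that $R$ dissects $L$. Since $L\cap R$ is infinite and $F$ is finite, the set $L\cap\bigcup_{s\in S}\{a^{s+ir}\mid i\in\mathbb{N}_0\}$ is still infinite; as $S$ is finite, the pigeonhole principle yields a single offset $s\in S$ for which $L\cap\{a^{s+ir}\mid i\in\mathbb{N}_0\}$ is infinite. I would then set $\overline R=\{a^{s+ir}\mid i\in\mathbb{N}_1\}$, discarding at most the single initial word $a^{s}$, which cannot affect infiniteness. By Lemma~\ref{jduiej9ff}, together with the converse observation that such a single progression is accepted by a tail-plus-cycle automaton in which exactly the one relevant cycle state is accepting, we get $\overline R\in\REG(1)$, and by construction $L\cap\overline R$ is infinite.

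It remains to check that $L\setminus\overline R$ is infinite, and this is where choosing $\overline R$ as a \emph{sub}-progression of $R$ pays off. Because $\overline R\subseteq R$, we have the inclusion $L\setminus R\subseteq L\setminus\overline R$. Since $R$ dissects $L$, the left-hand side is infinite, hence so is the right-hand side. Combining this with the previous paragraph, $\overline R$ dissects $L$, which is exactly what is required.

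I expect no deep obstacle here; the only point that needs care is the structural decomposition of $R$ into finitely many arithmetic progressions of a common difference together with a finite remainder, i.e.\ the passage from ``accepting states of a unary automaton'' to ``arithmetic progressions''. Once that is in place, the pigeonhole step and the inclusion $\overline R\subseteq R$ finish the argument almost immediately; in particular, the infiniteness of $L\setminus\overline R$ is inherited for free from the infiniteness of $L\setminus R$, so the ``hard'' half of the dissection property requires essentially no work.
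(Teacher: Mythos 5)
Your proposal is correct and follows essentially the same route as the paper: the paper splits $R$ into the sets $R(t)$ of words sent to each accepting state $t$, applies the pigeonhole argument to find one $t$ with $\vert R(t)\cap L\vert=\infty$, and uses $R(t)\subseteq R$ for the other half of the dissection --- your explicit tail-plus-cycle decomposition into a finite set and arithmetic progressions is just a concrete description of those same sets $R(t)$.
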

\begin{proof}
Let $D$ be a deterministic finite automaton accepting $R$ and let $T$ be the set of accepting states of $D$. Given an accepting state $t\in T$, we define \[R(t)=\{u\in R\mid \mbox{ Given the input }u\mbox{, } D\mbox{ halts in the state }t\}\mbox{.}\]
Since $\vert R\vert=\infty$, it is clear that if $R$ dissects $L$ then obviously there is at least one accepting state $t\in T$ such that $\vert R(t)\cap L\vert=\infty$.  Because $R(t)\subseteq R$, it follows that $R(t)$ dissects $L$. 
The definition of $R(t)$  implies that $R(t)\in\REG(1)$. The lemma follows.

This ends the proof.
\end{proof}

Equipped with Lemma \ref{jduiej9ff} and Lemma \ref{idrr9rf} we can prove that there is no regular language $R$ such $R$ dissects $\Pi$.
\begin{proposition}
\label{hhbd7egdj}
The language $\Pi$ is not $\REG$-dissectible.
\end{proposition}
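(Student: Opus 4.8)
The plan is to combine Lemma~\ref{idrr9rf} and Lemma~\ref{jduiej9ff} to reduce the claim to a concrete arithmetic obstruction. By Lemma~\ref{idrr9rf}, if any regular language dissected $\Pi$, then some $\overline R\in\REG(1)$ would dissect $\Pi$; and by Lemma~\ref{jduiej9ff}, every such $\overline R$ has the explicit form $\overline R=\{a^{q+ir}\mid i\in\mathbb{N}_1\}$ for fixed $q\in\mathbb{N}_0$ and $r\in\mathbb{N}_1$. So it suffices to show that for every such arithmetic progression, we cannot have both $\vert \overline R\cap \Pi\vert=\infty$ and $\vert \Pi\setminus \overline R\vert=\infty$. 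I will argue by contradiction: assume $\vert \overline R\cap \Pi\vert=\infty$ and derive that in fact $\overline R$ eventually contains \emph{all} sufficiently long elements of $\Pi$, forcing $\vert \Pi\setminus\overline R\vert<\infty$ and contradicting dissection.

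The key idea is that membership $a^{\phi(j,n)}\in\overline R$ is the congruence condition $\phi(j,n)\equiv q\pmod{r}$, and I want to leverage the divisibility fact from Lemma~\ref{fjf88e7f} together with the structure of $\Pi$ to show this congruence is automatically satisfied for all large words once it is satisfied infinitely often. First I would fix $r$ and choose $n$ large enough that $\omega(n)$ is divisible by $r$: since $\omega(n)=\frac{n!}{(\lfloor\gamma\ln n\rfloor\alpha+1)!}$ grows and picks up all prime factors up to $n$ with high multiplicity, for every fixed $r$ there is a threshold $N$ so that $r\mid\omega(n)$ whenever $n\geq N$. Then by Lemma~\ref{fjf88e7f}, $\omega(n)\mid\phi(j,n)$, hence $r\mid\phi(j,n)$, so $\phi(j,n)\equiv 0\pmod r$ for \emph{all} $(j,n)\in\Delta$ with $n\geq N$. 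This is the crux: for large $n$ the congruence class of $\phi(j,n)$ modulo $r$ is constantly $0$, independent of $j$ and $n$.

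With that uniformity in hand, the argument closes quickly. If $\vert\overline R\cap\Pi\vert=\infty$, then there is some $a^{\phi(j_0,n_0)}\in\overline R$ with $n_0\geq N$, giving $q\equiv\phi(j_0,n_0)\equiv 0\pmod r$; thus $q\equiv 0\pmod r$ as well. But then for \emph{every} $(j,n)\in\Delta$ with $n\geq N$ and $\phi(j,n)\geq q$ we have $\phi(j,n)\equiv 0\equiv q\pmod r$, so $a^{\phi(j,n)}\in\overline R$. Consequently all but finitely many elements of $\Pi$ lie in $\overline R$, i.e. $\vert\Pi\setminus\overline R\vert<\infty$, contradicting the assumption that $\overline R$ dissects $\Pi$. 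Hence no $\overline R\in\REG(1)$ dissects $\Pi$, and by Lemma~\ref{idrr9rf} no regular language dissects $\Pi$ at all.

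The step I expect to require the most care is justifying that for every fixed $r$ one has $r\mid\omega(n)$ for all sufficiently large $n$. The point is that $\omega(n)$ is a product of a long run of consecutive integers (namely $n(n-1)\cdots(\lfloor\gamma\ln n\rfloor\alpha+2)$), whose length grows like $n$ while the ``cutoff'' $\lfloor\gamma\ln n\rfloor\alpha+1$ grows only logarithmically; so $\omega(n)$ contains every integer up to $n$ as a factor for large $n$, in particular the factor $r$. Making this threshold argument precise (and checking it is uniform in $j$) is the only genuinely non-routine part; everything else is a direct consequence of Lemma~\ref{fjf88e7f} and the explicit description of $\REG(1)$.
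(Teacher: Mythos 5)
Your proposal is correct and follows essentially the same route as the paper: reduce via Lemma~\ref{idrr9rf} and Lemma~\ref{jduiej9ff} to an arithmetic progression $\{a^{q+ir}\}$, observe that for all sufficiently large $n$ the block of consecutive integers forming $\omega(n)$ contains a multiple of $r$, so $r\mid\omega(n)\mid\phi(j,n)$ by Lemma~\ref{fjf88e7f}, and conclude that all but finitely many words of $\Pi$ lie in a single residue class mod $r$, which rules out dissection. The paper phrases the endgame as a two-case split on whether $q\equiv 0\pmod r$ rather than your contradiction from an infinite intersection, but this is only a cosmetic difference.
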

\begin{proof}
Suppose that the language $\Pi$ is $\REG$-dissectible. Then Lemma \ref{jduiej9ff} and Lemma \ref{idrr9rf} imply that there are $q,r\in\mathbb{N}_0$ and $R\in\REG(1)$ such that \[R=\{a^{q+ir}\mid i\in\mathbb{N}_1\}\] and $R$ dissects $\Pi$. 

Let $n_0$ be such that $n_0-\alpha\lfloor\gamma\ln{n_0}\rfloor>r$. Realize that \[\lim_{n\rightarrow\infty}(n-\alpha\lfloor\gamma\ln{n}\rfloor)=\infty\mbox{,}\] hence such $n_0$ exists. 
Obviously $n-\alpha\lfloor\gamma\ln{n}\rfloor>r$ for all $n>n_0$. It follows that there is \[m\in\{\alpha\lfloor\gamma\ln{n}\rfloor+1, \alpha\lfloor\gamma\ln{n}\rfloor+2, \dots,n\}\] such that $0\equiv m\pmod{r}$. In consequence, we have that 
\[0\equiv\omega(n)\pmod{r}\mbox{.}\] 
Lemma \ref{fjf88e7f} implies that \[0\equiv\phi(j,n)\pmod{\omega(n)}\mbox{.}\]
It follows that if $n>n_0$ then \[0\equiv\phi(j,n)\pmod{r}\mbox{.}\]

Hence there are $\overline \Pi\subseteq \Pi$ and $h\in\mathbb{N}_0$ such that $\vert \Pi\setminus\overline \Pi\vert<\infty$, $h\geq q$ and \[\overline \Pi\subseteq \{a^{h+ir}\mid i\in\mathbb{N}_1\}\mbox{.}\] 
We distinguish two following cases:
\begin{itemize}
\item If $0\equiv q\pmod{r}$ then $\overline \Pi\subseteq R$.
\item If  $0\not\equiv q\pmod{r}$ then $\overline \Pi\cap R=\emptyset$.
\end{itemize}
We conclude that $\Pi$ is not $\REG$-dissectible. 

This ends the proof.
\end{proof}

Now we step to the main result of the current article.
\begin{theorem}
For every $c\in\mathbb{R}^+$ with $c>1$ there is a $c$-geometrically growing language $L$ that is not $\REG$-dissectible.
\end{theorem}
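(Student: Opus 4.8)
The plan is to assemble the theorem directly from the machinery already established, since the two substantive facts have been proved: Proposition \ref{hhbd7egdj} shows $\Pi$ is not $\REG$-dissectible, and Corollary \ref{dh787e9fjrer} turns $\Pi$ into a $c$-geometrically growing language by deleting finitely many words, provided $c > \frac{\beta}{\alpha}$. The only genuine design decision is to choose the parameters $\alpha,\beta$ so that this constraint is compatible with an arbitrary $c > 1$. Given a real $c > 1$, I would pick $\alpha \in \mathbb{N}_1$ large enough that $1 + \frac{1}{\alpha} < c$ (any $\alpha > \frac{1}{c-1}$ works) and set $\beta = \alpha + 1$. Then $\alpha < \beta$ and $\frac{\beta}{\alpha} = 1 + \frac{1}{\alpha} < c$, so the hypotheses of the earlier development are satisfied and the language $\Pi$ associated with these $\alpha,\beta$ is well defined.

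Applying Corollary \ref{dh787e9fjrer} to this $\Pi$ and this $c$ yields a sublanguage $\overline\Pi \subseteq \Pi$ that is $c$-geometrically growing and satisfies $\vert\Pi\setminus\overline\Pi\vert < \infty$. I would then set $L = \overline\Pi$, so that $L$ is $c$-geometrically growing by construction, and it remains only to verify that $L$ is not $\REG$-dissectible.

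For this last point I would argue by contradiction, transferring non-dissectibility from $\Pi$ down to $\overline\Pi$. Suppose some regular language $R$ dissects $\overline\Pi$, that is, $\vert\overline\Pi\setminus R\vert = \infty$ and $\vert\overline\Pi\cap R\vert = \infty$. Since $\overline\Pi \subseteq \Pi$, we have the inclusions $\overline\Pi\setminus R \subseteq \Pi\setminus R$ and $\overline\Pi\cap R \subseteq \Pi\cap R$, whence $\vert\Pi\setminus R\vert = \infty$ and $\vert\Pi\cap R\vert = \infty$. Thus $R$ would dissect $\Pi$, contradicting Proposition \ref{hhbd7egdj}. Hence no regular language dissects $L = \overline\Pi$, which completes the proof.

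I do not expect a real obstacle, as every hard step has already been absorbed into the preceding lemmas, propositions, and the corollary. The two things to handle carefully are purely bookkeeping: first, confirming that integer parameters with $1 < \frac{\beta}{\alpha} < c$ always exist (which the choice $\beta = \alpha+1$ with large $\alpha$ settles), and second, the logical direction in the final paragraph, namely that removing finitely many words from a non-$\REG$-dissectible language preserves non-dissectibility. The latter is immediate from the monotonicity of the two infiniteness conditions under passing to the superset $\Pi$, and does not even require the finiteness of $\Pi\setminus\overline\Pi$.
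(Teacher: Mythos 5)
Your proposal is correct and follows essentially the same route as the paper: choose $\alpha,\beta$ with $1<\frac{\beta}{\alpha}<c$, invoke Corollary \ref{dh787e9fjrer} and Proposition \ref{hhbd7egdj}, and transfer non-dissectibility to $\overline\Pi$ by the monotonicity of the two infiniteness conditions. You merely make explicit two steps the paper leaves implicit (the concrete choice $\beta=\alpha+1$ with $\alpha>\frac{1}{c-1}$, and the subset argument the paper dismisses as ``easy to see''), and your observation that the finiteness of $\Pi\setminus\overline\Pi$ is not needed for that last step is accurate.
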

\begin{proof}
Let $\alpha,\beta\in\mathbb{N}_1$ be such that $c>\frac{\beta}{\alpha}$ and $\beta>\alpha$. Since $c>1$ we have that such $\alpha,\beta$ exist. 
Then Corollary \ref{dh787e9fjrer} asserts that there is $\overline \Pi\subseteq \Pi$ such that  
     $\overline \Pi$ is $c$-geometrically growing and $\vert\Pi\setminus\overline\Pi\vert<\infty$. 
From Proposition \ref{hhbd7egdj} we have that $\Pi$ is not $\REG$-dissectible. Then it is easy to see that for every $S\subseteq \Pi$ with $\vert\Pi\setminus S\vert<\infty$ we have that $S$ is not $\REG$-dissectible. 
Hence we conclude $\overline\Pi$ is not $\REG$-dissectible.

 This completes the proof.
\end{proof}

\bibliographystyle{siam}
\IfFileExists{biblio.bib}{\bibliography{biblio}}{\bibliography{../!bibliography/biblio}}

\end{document}